\newtheorem{theorem}{\emph{\underline{Theorem}}}
\newtheorem{lemma}{\emph{\underline{Lemma}}}
\newtheorem{remark}{\bf \emph{\underline{Remark}}}
\def\({\left(}
\def\){\right)}
\def\b0{{\mathbf{0}}}
\newcommand{\nn}{\nonumber}
\begin{document}
\captionsetup[figure]{name={Fig.}}     

\title{\huge 
Mixed Near- and Far-Field Communications for Extremely Large-Scale Array: An Interference Perspective}
\author{Yunpu Zhang, Changsheng You, \IEEEmembership{Member, IEEE}, Li Chen, \IEEEmembership{Senior Member, IEEE}, \\and Beixiong Zheng, \IEEEmembership{Member, IEEE}
\thanks{Y. Zhang and C. You are with the Department of Electronic and Electrical Engineering, Southern University of Science and Technology, Shenzhen 518055, China (e-mail: zhangyp2022@mail.sustech.edu.cn; youcs@sustech.edu.cn).
	
L. Chen is with the CAS Key Laboratory of Wireless-Optical Communications, University of Science and Technology of China, Hefei 230027, China (e-mail: chenli87@ustc.edu.cn).
	
B. Zheng is with the School of Microelectronics, South China University of Technology, Guangzhou 511442, China (e-mail: bxzheng@scut.edu.cn).
}\vspace{-0.5cm}}

\maketitle
\begin{abstract} 
Extremely large-scale array (XL-array) is envisioned to achieve super-high spectral efficiency in future wireless networks. Different from the existing works that mostly focus on the near-field communications, we consider in this paper a new and practical scenario, called \emph{mixed} near- and far-field communications, where there exist both near- and far-field users in the network. For this scenario, we first obtain a closed-form expression for the inter-user interference at the near-field user caused by the far-field beam by using Fresnel functions, based on which the effects of the number of BS antennas, far-field user (FU) angle, near-field user (NU) angle and distance are analyzed. We show that the strong interference exists when the number of the BS antennas and the NU distance are relatively small, and/or the NU and FU angle-difference is small. Then, we further obtain the achievable rate of the NU as well as its rate loss caused by the FU interference. Last, numerical results are provided to corroborate our analytical results. 
\end{abstract}
\begin{IEEEkeywords}
Extremely large-scale array/surface (XL-array/surface), mixed near- and far-field communications, interference analysis.
\end{IEEEkeywords}
\vspace{-0.5cm}
\section{Introduction}




Extremely large-scale array/surface (XL-array/surface) has emerged as a promising technology to achieve the ever-increasing performance requirements of future sixth-generation (6G) wireless networks, such as super-high spectral efficiency and spatial resolution \cite{9903389,9326394}. This fundamentally leads to the communication paradigm shift from the conventional far-field communications (with planar-wave propagation) to the near-field communications (with spherical-wave propagation) \cite{9620081} and even the new \emph{mixed} near- and far-field communications (with both planar-/spherical-wave propagation). For example, consider an XL-array communication system where a base station (BS) equipped with an antenna
of diameter $0.5$ meter (m) communicates with users at $30$ GHz frequency. In this case, the well-known \emph{Rayleigh distance} is about $50$ m \cite{zhang2022fast}. As such, for a typical communication scenario, it may happen that some users reside in the near-field region, while the others locate in the far-field region, thus leading to several new design issues such as mixed-field channel estimation, joint near-/far-field beamforming, etc.

In particular, consider the inter-user interference in different communication scenarios. First, if the users considered are all in the far-field region, spatial division multiple access (SDMA) \cite{wu2022multiple} or beam division multiple access (BDMA) \cite{7093168} can be employed to simultaneously serve multiple users with low inter-user interference. This is because different far-field directional beams pointing towards different users are asymptotically orthogonal in the angular domain, thus eliminating the inter-user interference. Next, if the users locate in the near-field region, it has been recently shown in \cite{wu2022multiple,9738442} that the near-field users locating at different angles and/or (BS-user) distances can be served simultaneously by using the near-field beams with reduced inter-user interference. This is achieved by exploiting the unique near-field \emph{beam focusing} effect that enables the beam to be focused at a specific location rather than a specific direction in conventional far-field communications.
\begin{figure}[t]
	\centering
	\includegraphics[width=6cm]{./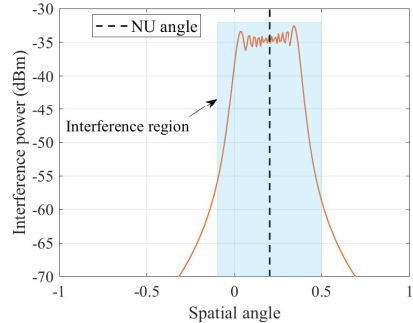}
	\caption{Interference power at a near-field user versus the spatial angle of a far-field beam, where the BS antenna number is 256, the BS transmit power is $30$ dBm, the carrier frequency is 30 GHz, and the BS-user distance is $3$ m.}
	\label{fig:Obs}
	\vspace{-20pt}
\end{figure}

{
Nevertheless, for the new mixed-field communication scenario, the inter-user interference analysis becomes more complicated. Specifically, the interference at the near-field user (NU) caused by the far-field user (FU) is fundamentally determined by the correlation between the NU channel steering vector and the FU beam. To examine it, we show in Fig.~\ref{fig:Obs} the interference power at a NU caused by the FU beam. An interesting observation is that the NU suffers strong interference from the FU beam, even when it locates at a spatial angle different from that of the FU (see the shadow area), which significantly differs from the results in the scenarios with NUs or FUs only.
This new phenomenon, however, has not been studied in the existing literature, which thus motivates the current work.

In this paper, we consider an XL-array communication system with one NU and one FU. First, we characterize the normalized interference power at the NU caused by the FU beam in closed form by using Fresnel functions,\footnote{The obtained results can be readily extended to analyze the interference at the FU caused by the NU beam.} based on which the effects of the number of BS antennas, FU angle, NU angle and distance are analyzed. It is shown that there is strong interference when the the number of BS antennas and the NU distance are relatively small, and/or the FU and NU angles are very close. Then, the resulting rate loss due to inter-user interference is obtained, and numerical results are provided to verify our theoretical results.}

\section{System Model}
As shown in Fig.~\ref{fig:SM}, we consider a two-user XL-array wireless communication system under the mixed near- and far-field communication scenario, where an $N$-antenna BS with a uniform linear array (ULA) serves a single-antenna NU and a single-antenna FU. Specifically, the NU and FU are respectively located near and far from the BS with the corresponding BS-user distance smaller and larger than the so-called Rayleigh distance, denoted by $Z=\frac{2D^2}{\lambda}$ with $D$ and $\lambda$ representing the array aperture and carrier wavelength, respectively.\footnote{The NU reduces to a FU when $N$ is sufficiently small and/or the BS-user distance is sufficiently large.}

\vspace{-12pt}
\subsection{Channel Models}
\underline{\bf 1) Far-field channel model:} Similar to \cite{9129778}, when the user locates sufficiently far from the BS (in the far-field region), the BS$\to$FU channel can be characterized based on the planar-wave assumption, which is given by
\begin{equation}
	 \mathbf{h}^H_{\rm far}=\sqrt{N} h_{\rm far}\mathbf{a}^{H}(\psi),
	 	\vspace{-3pt}
\end{equation}
where $h_{\rm far}$ denotes the complex-valued channel gain between the BS and FU. $\mathbf{a}^{H}(\psi)$
denotes the far-field steering vector, given by
\begin{equation}\label{far_steering}
    \mathbf{a}(\psi)\triangleq \frac{1}{\sqrt{N}}\left[1, e^{j \pi \psi},\cdots, e^{j \pi (N-1)\psi}\right]^{T},
    	\vspace{-3pt}
\end{equation}
where $\psi=2d\cos(\varphi)/ \lambda$ denotes the spatial angle of the FU with $\varphi$ being the physical angle-of-departure (AoD) from the BS center to the FU. Without loss of generality, we assume that the $N$-antenna BS is placed along the $y$-axis and the $n$-th antenna is located at $(0,\delta_{n}d)$ m, where $\delta_{n}=\frac{2n-N+1}{2}$ with $n=0,1,\ldots,N-1$, and $d=\frac{\lambda}{2}$ denotes the antenna spacing. 
\begin{figure}[t]
	\centering
	\includegraphics[width=8cm]{./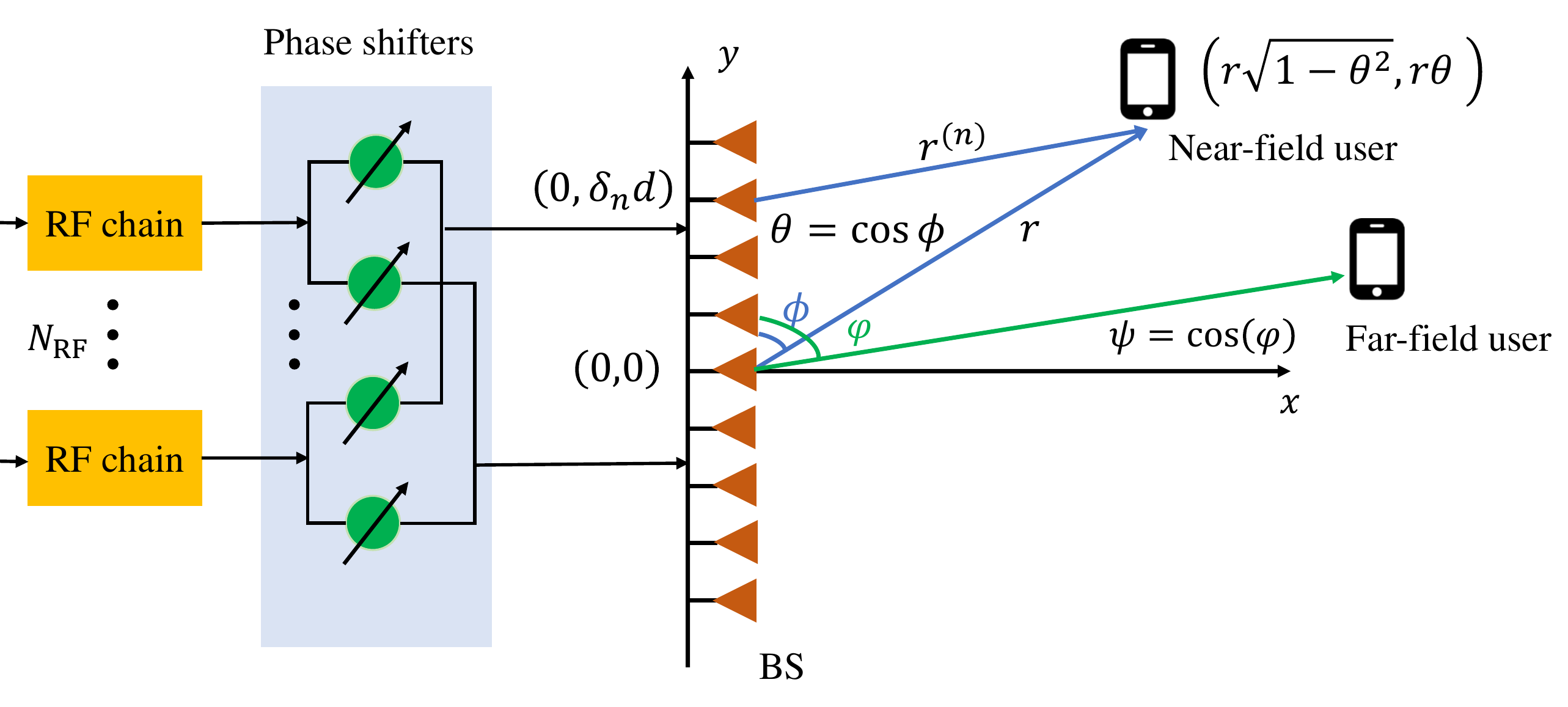}
	\caption{A two-user XL-array wireless communication system.}
	\label{fig:SM}
	\vspace{-15pt}
\end{figure}

\underline{\bf 2) Near-field channel model:} For the NU, we consider the more accurate spherical-wave propagation model, which applies as well when the user locates in the far-field region. As such, the BS$\to$NU channel can be modeled as
\begin{equation}\label{Eq:nf-model}
    \mathbf{h}^H_{\rm near}=\sqrt{N}h_{\rm near} \mathbf{b}^{H}(\theta,r),
\end{equation}
where $h_{\rm near}=\frac{\sqrt{\beta}}{r}  e^{-\frac{\jmath 2 \pi r}{\lambda}}$ is the complex-valued channel gain\footnote{{In this paper, we consider the Fresnel region with the BS-NU distance $r$ larger than $1.2D$, for which the amplitude variations over XL-array antennas are negligible \cite{9723331}, while the case with non-negligible amplitude variations will be left for future work.}} with $\beta$ and $r$ denoting the reference channel gain at a distance of $1$ m and the distance between the BS center and the NU, respectively. $\mathbf{b}^{H}(\theta,r)$ denotes the near-field steering vector, which is given by \cite{luo2022beam}
\begin{equation}\label{near_steering}
    \mathbf{b}\left(\theta, r\right)\!=\!\frac{1}{\sqrt{N}}\!\!\left[e^{-j 2 \pi(r^{(0)}-r)/\lambda}, \cdots, e^{-j 2 \pi(r^{(N-1)}-r)/\lambda}\right]^{T},
\end{equation}
where $\theta=2d\cos(\phi)/ \lambda$ denotes the spatial angle of the NU with $\phi$ denoting the physical AoD from the BS center to the NU; $r^{(n)}=\sqrt{r^2+\delta_{n}^2d^2-2r\theta \delta_{n}d}$ represents the distance between the $n$-th antenna at the BS (i.e., ($0, \delta_{n}d$)) and NU.
\vspace{-10pt}
\subsection{Signal Model under Mixed-Field Communications}
The BS is equipped with $N_{\rm RF}\ge K$ radio frequency (RF) chains to enable multiuser communications (i.e., $K$ users). Without loss of generality, we assume $N_{\rm RF}=2$ for the two-user case. Moreover, to facilitate the interference analysis in the sequel, we assume the analog-only beamforming for each user, while the analysis can be extended when digital beamforming is applied to further reduce the interference. Let $x_{k}$, $k=\{1,2\}$ denote the transmitted symbol by the BS to user $k$ with power $P_{k}$, and $\mathbf{v}_{\rm near}$ and $\mathbf{v}_{\rm far}$ represent the transmit beamformers for the NU and FU, respectively. Then, 
the received signal at the NU is given by
\begin{align}
    y_{\rm near}&=\mathbf{h}^H_{\rm near}\mathbf{v}_{\rm near}x_{\rm near}+\mathbf{h}^H_{\rm near}\mathbf{v}_{\rm far}x_{\rm far}+z_{0}\nn\\
    &=\sqrt{N}h_{\rm near}\mathbf{b}^{H}(\theta,r)(\mathbf{v}_{\rm near}x_{\rm near}+\mathbf{v}_{\rm far}x_{\rm far})+z_{0},
\end{align}
where $z_{0}$ is the received additive white Gaussian noise (AWGN) at the NU with power $\sigma^2$.

As such, the receive signal-to-interference-plus-noise ratio (SINR) at the NU is given by
\begin{equation}\label{Eq:real_rete}
	\mathbf{\rm SINR}_{\rm near}=\frac{P_{\rm near}\left|\sqrt{N} h_{\rm near} \mathbf{b}^H\left(\theta, r\right) \mathbf{v}_{\rm near}\right|^2}{P_{\rm far}\left|\sqrt{N} h_{\rm near} \mathbf{b}^H\left(\theta, r\right) \mathbf{v}_{\rm far}\right|^2+\sigma^2}.
\end{equation}

\noindent For ease of implementation and maximizing the received signal power at the intended user, the beamformers for these two users are respectively designed as $\mathbf{v}_{\rm near}=\mathbf{b}(\theta, r)$ and $\mathbf{v}_{\rm far}=\mathbf{a}(\psi)$. Hence, the achievable rate of the NU in bits per second per Hertz (bps/Hz) is given by 
\begin{align}\label{Eq:real_rete1}
	R_{\rm near}&=\log _2\left(1+	\mathbf{\rm SINR}_{\rm near}\right)\nn\\
	&=\log _2\left(1+\frac{P_{\rm near}g_{\rm near}}{P_{\rm far}g_{\rm near}f^2(N,\psi,\theta,r)+\sigma^2}\right),
\end{align}
where $g_{\rm near}=N\beta/r^2$, and  $f(N,\psi,\theta,r)=\left|\mathbf{b}^H\left(\theta, r\right) \mathbf{a}(\psi)\right|$ is defined as the \emph{normalized (mixed-field) interference power} caused by the far-field beam to the near-field channel steering vector. In the following, we first characterize useful properties of the normalized interference power and then obtain the achievable rate of the NU.
\vspace{-6pt}
\section{Normalized Interference Power Analysis}
\subsection{Normalized Interference Power}
First, with the definitions of $\mathbf{a}(\psi)$ in \eqref{far_steering} and $\mathbf{b}^H\left(\theta, r\right)$ in \eqref{near_steering}, the normalized interference power $f(N,\psi,\theta,r)$ can be explicitly expressed as
		\begin{align}\label{Eq:New_f}
			f&(N,\psi,\theta,r)\triangleq\left|\mathbf{b}^H\left(\theta, r\right) \mathbf{a}(\psi)\right|\nn\\
			&\stackrel{(a_1)}{\approx}\frac{1}{N}\left|\sum_{\delta_{n}} e^{j \frac{2\pi}{\lambda}\left(-\delta_{n}d \theta+\frac{ \delta_{n}^2 d^2\left(1-\theta^2\right)}{2r}\right)+j \pi\left((\delta_{n}+\frac{N-1}{2}) \psi\right)}\right|\nn\\
			&\stackrel{(a_2)}{=}\frac{1}{N}\left|\sum_{n=0}^{N-1} e^{j \pi\left(\frac{ n^2 d(1-\theta^2)}{2r}-n\left(\theta-\psi+\frac{d(N-1)(1-\theta^2)}{2r}\right) \right)}\right|,
		\end{align}
	where $(a_1)$ follows from the taylor expansion and Fresnel approximation with $r^{(n)}=\sqrt{r^2+\delta_{n}^2d^2-2r\theta \delta_{n}d} \approx r-\delta_{n}d\theta+\frac{\delta_{n}^2d^2(1-\theta^2)}{2r}$, and $(a_2)$ is obtained by replacing $\delta_{n}$ with $n$, i.e., $\delta_{n}=\frac{2n-N+1}{2}$. {Note that $(a_1)$ is accurate when the BS-NU/scatter distance is larger than $0.5\sqrt{D^3/\lambda}$, which is much smaller than the Rayleigh distance $2D^2/\lambda$.}

However, the normalized interference power in \eqref{Eq:New_f} is still in a complicated form, thus making it hard to characterize the properties of normalized interference power. To tackle this issue, similar to \cite{deshpande2022wideband}, we approximate the normalized interference power in a more tractable form by using Fresnel functions.

\begin{theorem}\label{Th1}
	\emph{
		The normalized interference power $f(N,\psi,\theta,r)$ in \eqref{Eq:New_f} can be approximated as
		\begin{equation}\label{Eq:Th1}
			f(N,\psi,\theta,r)\approx
				G(\beta_{1},\beta_{2})= \left|\frac{\hat{C}(\beta_{1},\beta_{2})+j\hat{S}(\beta_{1},\beta_{2})}{2\beta_{2}}\right|,		
		\end{equation}
	where
	\begin{equation}\label{Eq:beta1}
		\beta_{1}=(\theta-\psi)\sqrt{\frac{r}{d(1-\theta^2)}}, 
	\end{equation}
	\begin{equation}\label{Eq:beta2}
	\beta_{2}=\frac{N}{2}\sqrt{\frac{d(1-\theta^2)}{r}}.
	\end{equation}
Moreover, $\hat{C}(\beta_{1},\beta_{2})\triangleq C(\beta_{1}+\beta_{2})-C(\beta_{1}-\beta_{2})$ and $\hat{S}(\beta_{1},\beta_{2})\triangleq S(\beta_{1}+\beta_{2})-S(\beta_{1}-\beta_{2})$, where $C(\cdot)$ and $S(\cdot)$ are the Fresnel integrals, defined as  $C(x)=\int_{0}^{x}\cos(\frac{\pi}{2}t^2) {\rm d} t$ and $S(x)=\int_{0}^{x}\sin(\frac{\pi}{2}t^2) {\rm d} t$. }
\end{theorem}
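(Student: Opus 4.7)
The plan is to convert the discrete sum in \eqref{Eq:New_f} into a continuous integral via a Riemann-sum approximation, and then evaluate that integral by completing the square in the quadratic phase and reducing to the standard Fresnel integrals $C(\cdot)$ and $S(\cdot)$ through a linear change of variable.

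Concretely, I would first collect the coefficients of $n$ and $n^{2}$ in the exponent of \eqref{Eq:New_f} so that the sum reads $\sum_{n=0}^{N-1} e^{j\pi (A n^{2} - \alpha n)}$ with $A \triangleq d(1-\theta^{2})/(2r)$ and $\alpha \triangleq \theta-\psi+d(N-1)(1-\theta^{2})/(2r)$. For the XL-array regime of interest, with large $N$ and the Fresnel condition $r > 0.5\sqrt{D^{3}/\lambda}$ already used to justify approximation $(a_{1})$, the per-step phase increment is small, and the sum is well approximated by the integral $\int_{0}^{N} e^{j\pi(Ax^{2}-\alpha x)}\,dx$.

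Next I would complete the square, writing $\pi(Ax^{2}-\alpha x) = \pi A(x-\alpha/(2A))^{2} - \pi\alpha^{2}/(4A)$; the constant term becomes a unit-modulus phase factor that is killed by the outer $|\cdot|$. The substitution $t = \sqrt{2A}\,(x-\alpha/(2A))$ then converts the integrand into $e^{j\pi t^{2}/2}$ with Jacobian $1/\sqrt{2A}$. Using $\sqrt{2A}=\sqrt{d(1-\theta^{2})/r}$ together with the large-$N$ simplification $N-1\approx N$ inside $\alpha$, a short calculation shows that the integration limits collapse to exactly $-(\beta_{1}+\beta_{2})$ and $\beta_{2}-\beta_{1}$, with $\beta_{1},\beta_{2}$ as defined in \eqref{Eq:beta1}--\eqref{Eq:beta2}. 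Splitting $e^{j\pi t^{2}/2}$ into real and imaginary parts, invoking the definitions of $C$ and $S$ and their odd symmetry, produces $\hat{C}(\beta_{1},\beta_{2}) + j\hat{S}(\beta_{1},\beta_{2})$. The combined prefactor $(1/N)\cdot(1/\sqrt{2A}) = 1/(2\beta_{2})$ then yields \eqref{Eq:Th1}.

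The main obstacle I expect is justifying the sum-to-integral step rigorously: the quadratic phase $\pi A n^{2}$ can traverse many full cycles as $n$ ranges over $0,\dots,N-1$, so a pointwise smoothness argument is not sufficient. A stationary-phase or Euler--Maclaurin-type estimate is needed, and the fact that it succeeds relies exactly on the Fresnel-regime assumption $r > 0.5\sqrt{D^{3}/\lambda}$ inherited from $(a_{1})$. The replacement $N-1\to N$ inside $\alpha$ is only a secondary approximation: it perturbs the Fresnel arguments by $O(1/\sqrt{N})$ and is absorbed into the overall approximation already inherent in the statement of the theorem.
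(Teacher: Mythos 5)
Your proposal follows essentially the same route as the paper's proof: the paper writes the exponent as the perfect square $j\pi(A_1 n-A_2)^2$ up front, replaces the sum by $\frac{1}{N}\int_0^N e^{j\pi(A_1 n-A_2)^2}\,\mathrm{d}n$, and substitutes $A_1 n-A_2=t/\sqrt{2}$ --- exactly your completing-the-square plus linear change of variable, performed in a slightly different order --- arriving at the same limits $-(\beta_1+\beta_2)$ and $\beta_2-\beta_1$ and the same prefactor $1/(2\beta_2)$. Your caveat about the rigor of the sum-to-integral step is fair, but the paper relies on the same unquantified Riemann-sum appeal, so this is not a gap relative to the paper's own argument.
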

\begin{proof}
	First, from \eqref{Eq:New_f}, we have
		\begin{align}\label{Eq:good}
		f(N,\psi,\theta,r)=\left|\frac{1}{N}\sum_{n=0}^{N-1} e^{j \pi\left(A_{1}n-A_{2}\right)^2}\right|
		=\left| F(A_{1},A_{2})\right|,\nn
	\end{align}
	where  $A_{1}=\sqrt{\frac{ d\left(1-\theta^2\right)}{2r}}$ and $A_{2}=\frac{1}{A_{1}}\left( \frac{\theta-\psi}{2}+\frac{(N-1)d(1-\theta^2)}{4r}\right)$.
\begin{figure}[t]
	\centering
	\subfigure[{3D illustration}]{\includegraphics[width=4.7cm]{./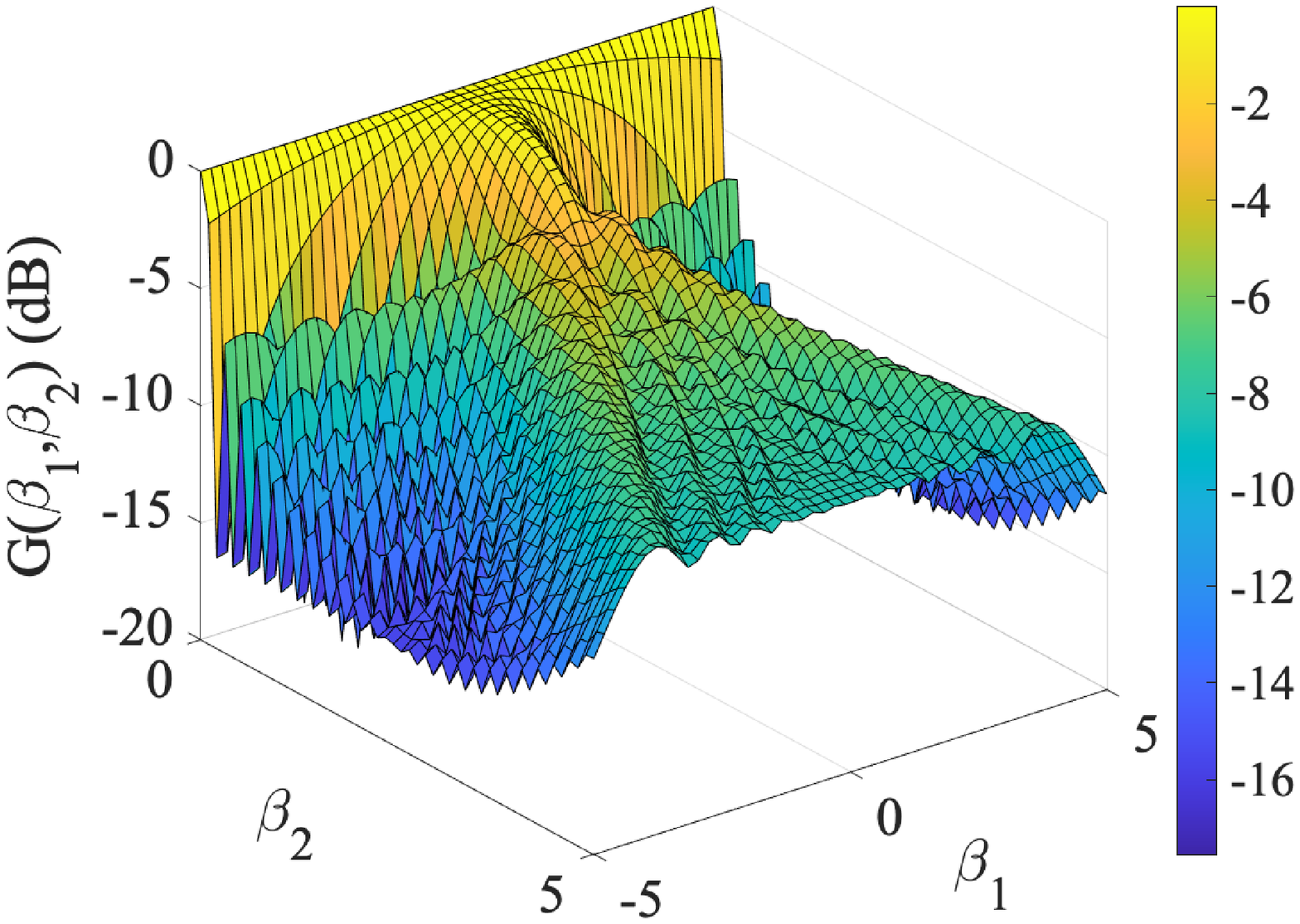}} 	\hspace{-12pt}
 \subfigure[{2D illustration}]{\includegraphics[width=4.7cm]{./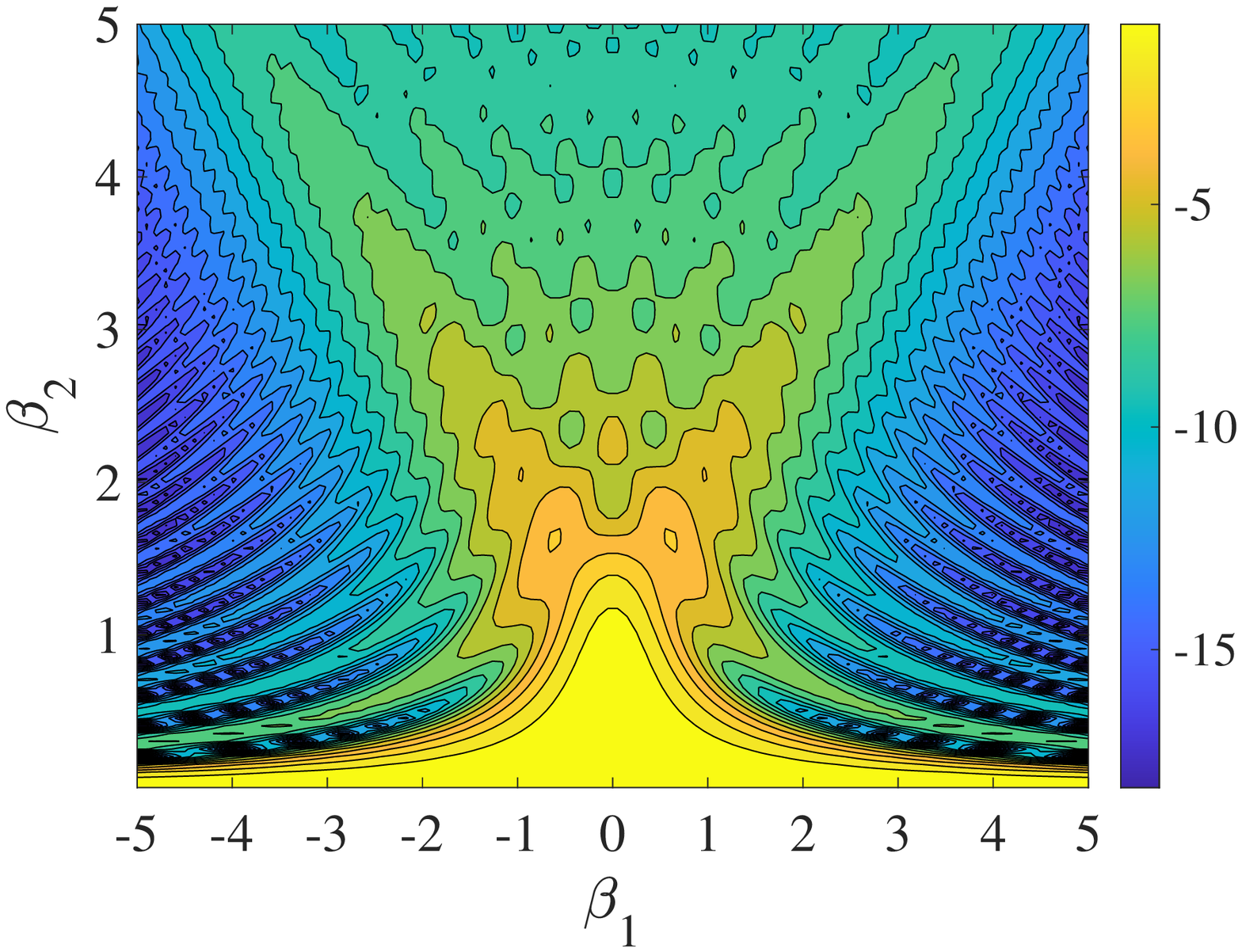}\label{fig:2D}}
	\caption{Illustrations of function $G(\beta_{1},\beta_{2})$ in (9) versus $\beta_{1}$ and $\beta_{2}$.}
	\label{fig:2_3D}
	\vspace{-8pt}
\end{figure}	
	Then, the summation in $F(A_{1}, A_{2})$ can be approximated as an integral,
	\begin{align}
		F(A_{1},A_{2})&\stackrel{(b_1)}{\approx}\frac{1}{N}\int_{0}^{N} e^{j \pi\left(A_{1}n-A_{2}\right)^2} {\rm d} n \nn\\
		&\stackrel{(b_2)}{=}\frac{1}{N\sqrt{2}A_{1}}\int_{-\sqrt{2}A_{2}}^{\sqrt{2}A_{1}N-\sqrt{2}A_{2}} e^{j \pi \frac{1}{2}t^2} {\rm d} t,
	\end{align}
	where $(b_1)$ is accurate when $N\to\infty$ according to the Riemann integral, and $(b_2)$ is obtained by letting $A_{1}n-A_{2}=\frac{\sqrt{2}}{2}t$.
 
	Next, based on the Fresnel integrals, we have 
	\begin{align}\label{Eq:FFF}
			&F(A_{1},A_{2})=\frac{1}{N\sqrt{2}A_{1}}\int_{-\sqrt{2}A_{2}}^{\sqrt{2}A_{1}N-\sqrt{2}A_{2}} e^{j \pi \frac{1}{2}t^2} {\rm d} t\nn\\
			&=\frac{\int_{0}^{\sqrt{2}A_{1}N-\sqrt{2}A_{2}} e^{ j \pi\frac{1}{2}t^2} {\rm d} t-\int_{0}^{-\sqrt{2}A_{2}} e^{j \pi\frac{1}{2}t^2} {\rm d} t}{\sqrt{2}A_{1}N}\nn\\
			&=\frac{C(\beta_{1}+\beta_{2})-C(\beta_{1}-\beta_{2})+j\left(S(\beta_{1}+\beta_{2})-S(\beta_{1}-\beta_{2})\right)}{2\beta_{2}},\nn
	\end{align}
	where $\beta_{1}=(\theta-\psi)\sqrt{\frac{r}{d(1-\theta^2)}}$ and $\beta_{2}=\frac{N}{2}\sqrt{\frac{d(1-\theta^2)}{r}}$.
	
	By defining $G(\beta_{1},\beta_{2})=\left| F(A_{1},A_{2})\right|$ and combining the above leads to the desired result in \eqref{Eq:Th1}.
\end{proof}
	\vspace{-6pt}
\begin{remark}[Useful properties of function $G(\cdot)$]\label{Re:Property}
    \emph{To draw useful insights, we first illustrate in Fig.~\ref{fig:2_3D} that function $G(\beta_{1},\beta_{2})$ versus $\beta_1$ and $\beta_2$ in both 2D and 3D. Several important properties of the function $G(\cdot)$ are summarized as follows.
    \begin{itemize}
        \item First, $G(\cdot)$ is symmetric with respect to $\beta_{1}$. With $\beta_{1}$ defined in \eqref{Eq:beta1}, this symmetry indicates that the normalized interference power keeps unchanged when the absolute values of the near-and-far user angle-difference $\theta-\psi$ are the same. Moreover, it is worth mentioning that $G(\cdot)$ is almost invariable with $\beta_{1}$ when $\beta_{2}$ is sufficiently small.
        \item Second, given $\beta_{1}$, $G(\cdot)$ generally decreases with $\beta_2$, while there exist large and small fluctuations in $G(\cdot)$ with growing $\beta_2$, when $|\beta_{1}|$ is relatively large (e.g., $|\beta_{1}|>0.6$) and small (e.g., $|\beta_{1}|\le0.6$).
        \item Moreover, it is worth noting that the derived closed-form expression for $G(\cdot)$ in \eqref{Eq:Th1} is a generalization of the column coherence between two near-field steering vectors defined in \cite{9693928} by setting $\beta_{1}=0$, i.e., both near- and far-field steering vectors are at the same angle ($\theta=\psi$).
    \end{itemize}}
\end{remark}

  \begin{figure}[t]
 	\centering
 	\subfigure[{Accuracy of approximation in \eqref{Eq:Th1}. }]{\includegraphics[width=4.8cm]{./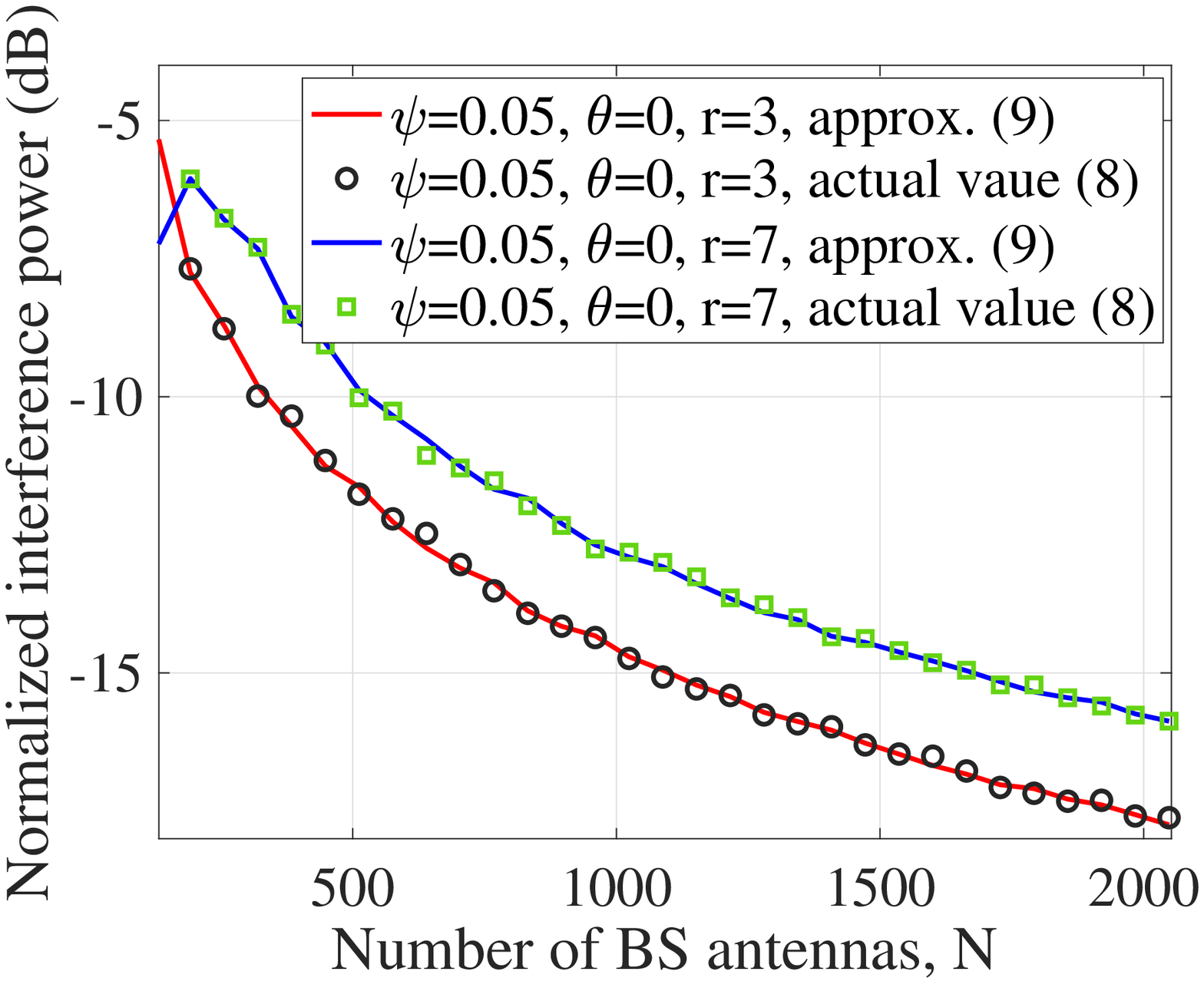}\label{fig:verification}}
 	\hspace{-18pt}
 	\subfigure[{Effect of number of BS antennas.}]{\includegraphics[width=4.8cm]{./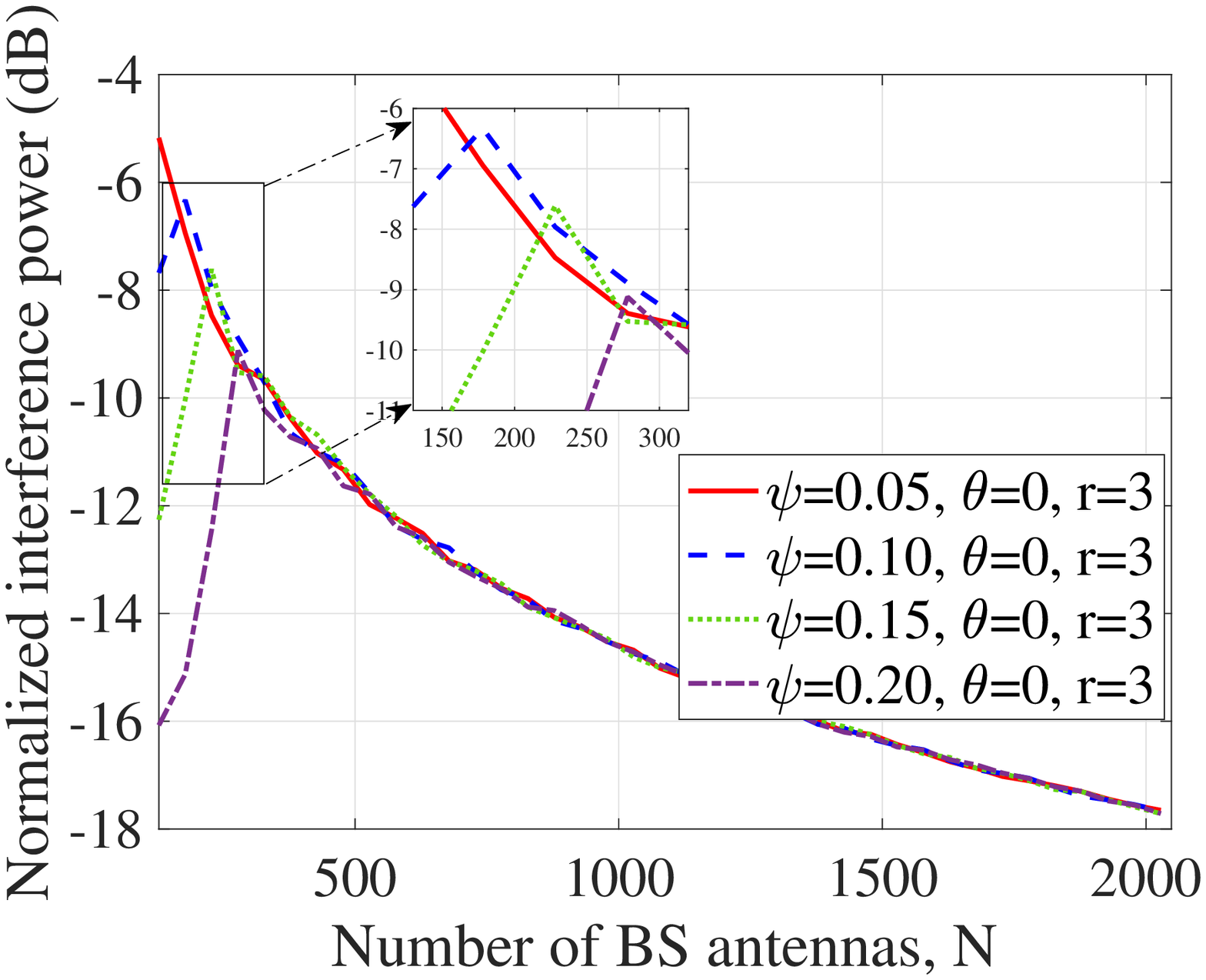}\label{figg:	BSAntennax}}
 	\caption{{Normalized interference power vs. number of BS antennas.}}
 	\vspace{-8pt}
 \end{figure}
	\vspace{-8pt}
\begin{remark}[What determines the normalized interference power?]
\emph{Theorem~\ref{Th1} shows an important result that the normalized interference power is fundamentally determined by the function $G(\cdot)$, as well as the two parameters $\beta_{1}$ and $\beta_{2}$. To be more specific, $\beta_{1}$ is a function of the FU (spatial) angle, the NU angle and distance, while $\beta_{2}$ is jointly determined by the number of BS antennas, as well as the NU angle and distance. These factors will be studied in more details in the next.}
\end{remark}
\vspace{-14pt}
\subsection{Effects of Key Parameters}
\subsubsection{Effect of the Number of BS Antennas}
{To begin with, we first plot Fig.~\ref{fig:verification} to compare the obtained approximated normalized interference power in \eqref{Eq:Th1} and its actual value in \eqref{Eq:New_f}. It is observed that the approximation in \eqref{Eq:Th1} is accurate under different numbers of BS antennas and NU distances.} Next,
given the fixed FU angle, NU angle and distance, $\beta_{2}$ is affected by the number of BS antennas $N$ only, while $\beta_{1}$ becomes a constant. Moreover, the BS antenna size affects the near-field region, since the Rayleigh distance, $Z=\frac{1}{2}N^2\lambda$, is quadratically increasing with the number of BS antennas. To characterize the effect of $N$, we plot in Fig.~\ref{figg:	BSAntennax} the normalized interference power versus the number of BS antennas. First, it is observed that the normalized interference power is first increasing and then decreasing with $N$. The reason is that, {when $N$ is small, increasing $N$ leads to a wider interference region in the angular domain (see Fig.~\ref{fig:Obs}); hence, increasing $N$ renders the NU more likely reside in the interference region and thus growing interference power.} Nevertheless, when $N$ is sufficiently large, the NU (in the interference region) suffers decreasing interference power with an increasing $N$ due to the wider interference region. Second, when $N$ is relatively small (i.e., $\beta_{2}$ is small), a larger angle-difference (i.e., larger $\beta_{1}$) leads to a smaller interference, which is consistent with the second property in Remark~\ref{Re:Property}. {By contrast, in the large-$N$ regime (e.g., $N>400$), the interference powers for different angle differences are comparable. This is expected since the interference region becomes wider with a larger $N$ and will occupy almost the whole spatial region when $N$ is sufficiently large, for which different angle-differences will suffer comparable interference power.}
  \begin{figure*}[t]
	\centering
		\centering
	\subfigure[Effect of FU angle.]{\includegraphics[width=5cm]{./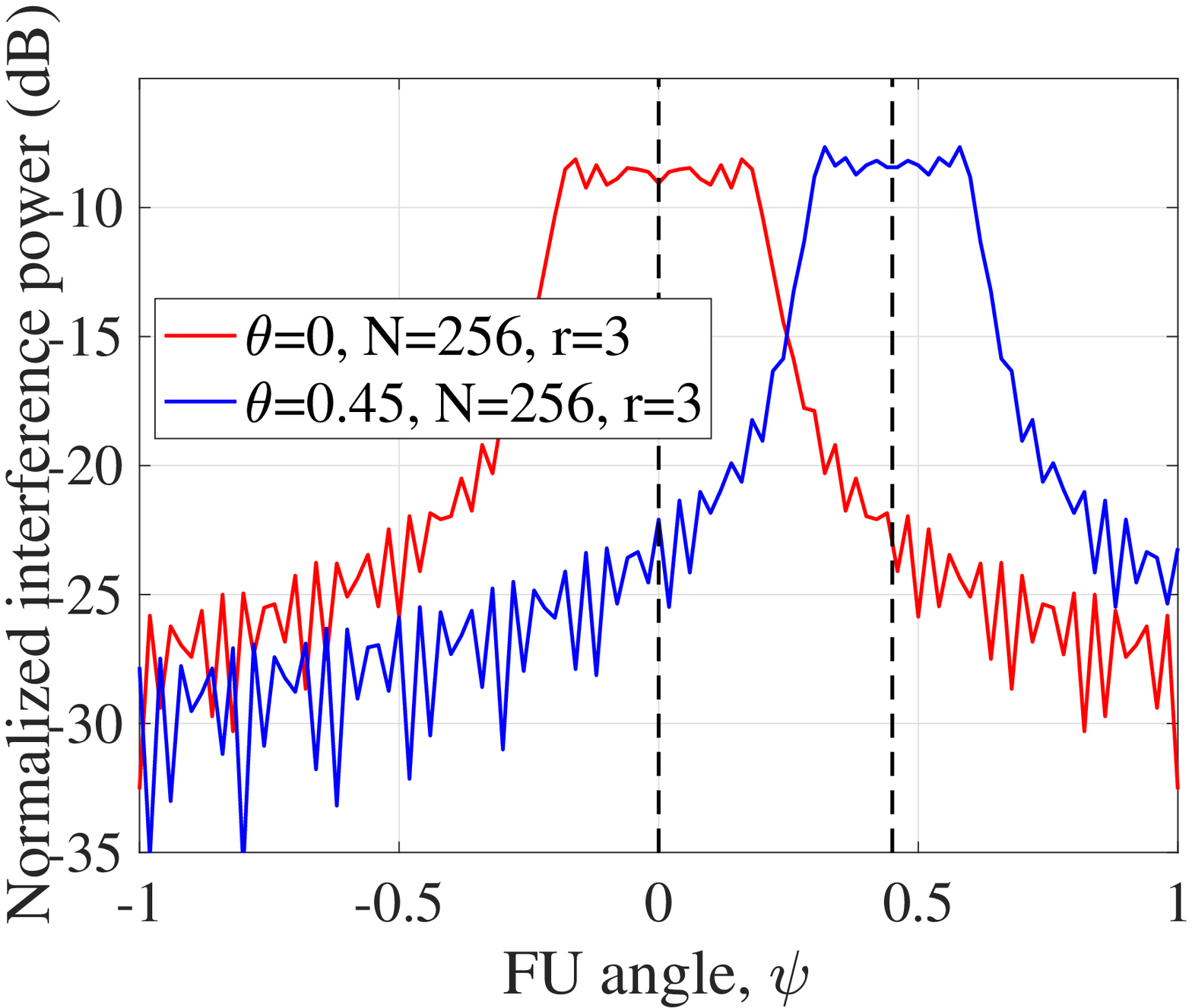}\label{fig:FU_angle}}
		\hspace{-18pt}
	\subfigure[Effect of angle-difference.]{\includegraphics[width=5cm]{./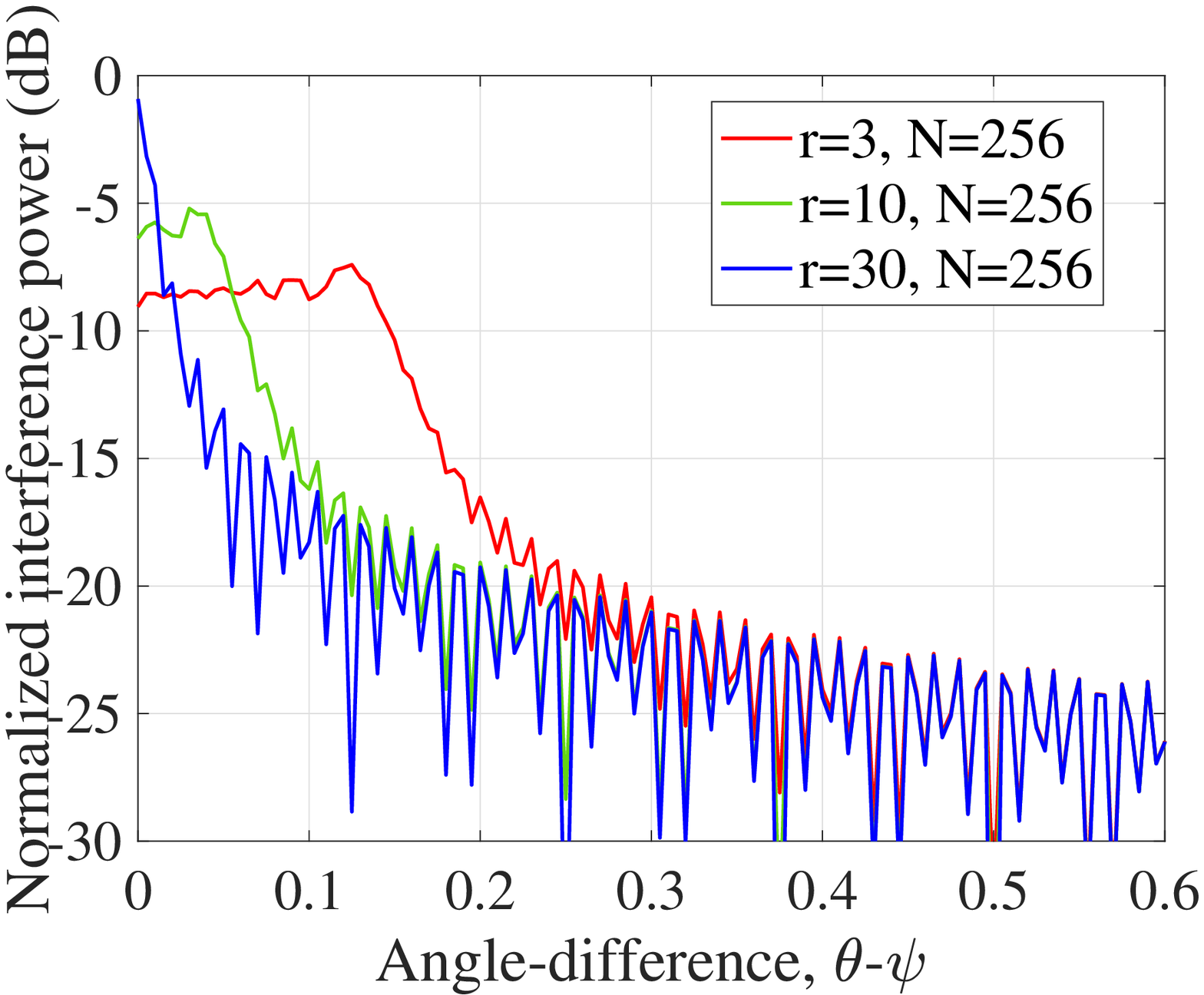}\label{fig:Effect_Angle_Diff}}
		\hspace{-18pt}
	\subfigure[Effect of NU angle.]{\includegraphics[width=5cm]{./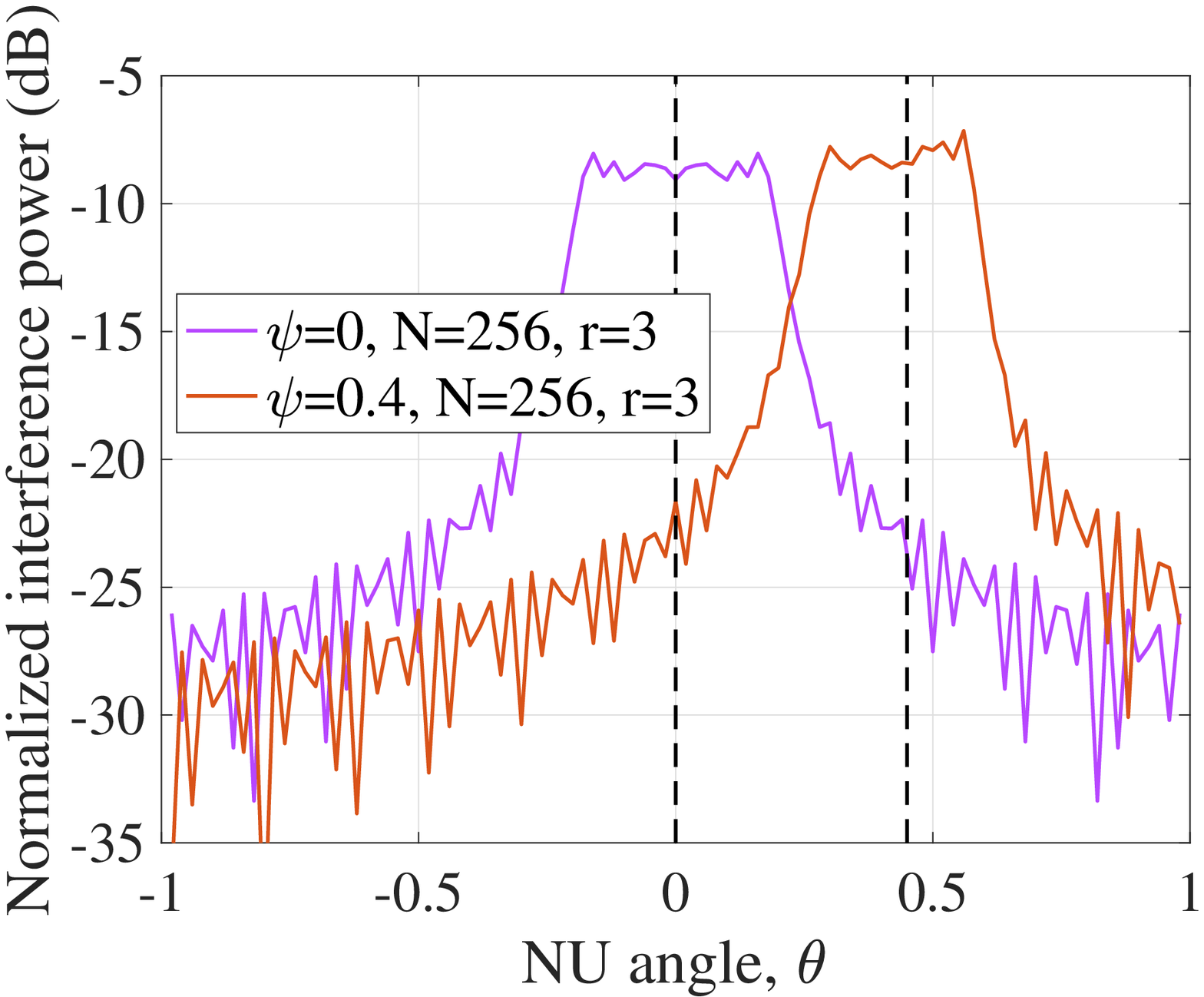}\label{fig:NU_angle}}
		\hspace{-18pt}
	\subfigure[Effect of NU distance.]{\includegraphics[width=5cm]{./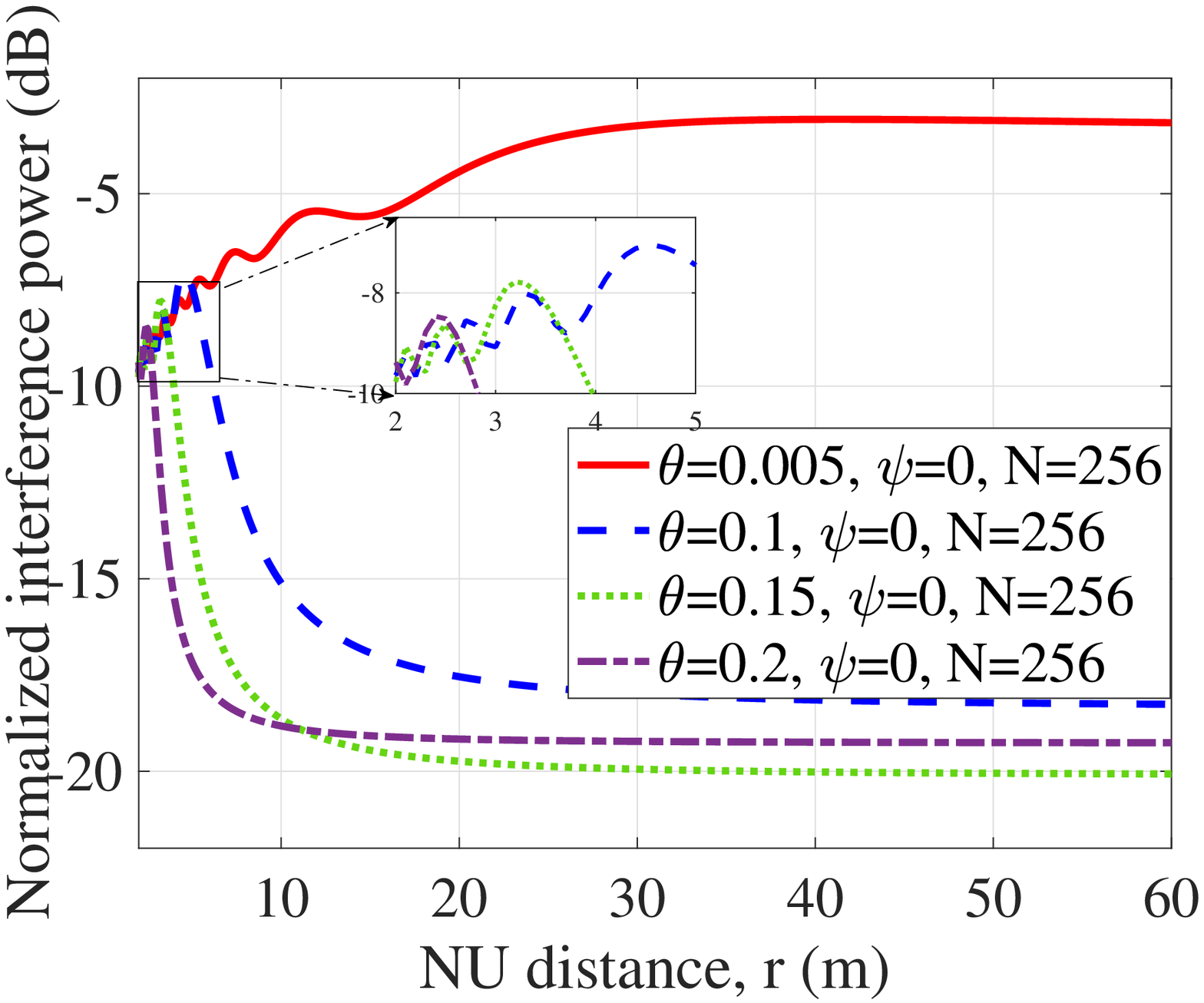}\label{fig:Effect_NU_Dist}}
	\caption{Normalized interference power versus key parameters, where the normalized interference power is obtained based on \eqref{Eq:Th1}.}
	\vspace{-16pt}
\end{figure*}

\subsubsection{Effect of the FU Angle}
Next, given the fixed number of BS antennas, the NU angle and distance, we characterize the effect of FU angle. In this case, $\beta_{1}$ is determined by the FU angle $\psi$, while $\beta_{2}$ is a constant. To be more specific, we plot in Fig.~\ref{fig:FU_angle} the normalized interference power versus the FU angle. An interesting observation is that when the FU beam angle locates in the neighborhood of the NU angle, there is always strong interference power at the NU. Moreover, the interference power is symmetric with respect to the NU angle, which is in accordance with the first property in Remark~\ref{Re:Property}. 

Given the NU angle, the FU angle also determines the near-and-far user angle-difference, whose effects are discussed below.
\vspace{-8pt}
\begin{remark}[Near-and-far user angle-difference]\label{Re:Angle_diff}
	\emph{In Fig.~\ref{fig:Effect_Angle_Diff}, we show the effect of the angle-difference on the normalized interference power. First, it is observed that, when the angle-difference is sufficiently small, e.g., $\theta-\psi<0.01$, a longer NU distance (e.g., $r=30$ m) yields a higher interference
		power. This is intuitively expected since when the NU and FU are very close, the interference power will increase with the NU distance, and the
		maximum interference power is obtained when the
		NU distance exceeds the Rayleigh distance (i.e., far-field region). Second, as the angle-difference increases, the interference power for different angle-differences generally decreases, and eventually appears very close when the angle-difference exceeds a threshold (e.g., $\theta-\psi>0.35$). Besides, the interference power with a longer NU distance decreases faster. This reason is that a longer NU distance leads to a narrower interference region.
	}
\end{remark}

\subsubsection{Effect of the NU Angle}
Note that both $\beta_{1}$ and $\beta_{2}$ are functions of NU angle $\theta$ with the fixed number of BS antennas, FU angle, and NU distance, thus making it difficult to obtain clear insights with respect to the effect of the NU angle. To this end,
we plot in Fig.~\ref{fig:NU_angle} the normalized interference power versus the NU angle $\theta$. It is observed that the effect of NU angle is very similar to that of the FU angle, whereas in the large-$\theta$ regime, the interference power fluctuates more drastically. 

\subsubsection{Effect of the NU Distance}
Last, we study the effect of the NU distance $r$ on the normalized interference power, given the fixed number of BS antennas, NU and FU angles. Based on Theorem~\ref{Th1}, it can be easily verified that $\beta_{1}$ monotonically increases with $r$, while $\beta_{2}$ decreases with $r$. Although it is difficult to analytically characterize the effect of NU distance, we provide the numerical result in Fig.~\ref{fig:Effect_NU_Dist} for illustration. Specifically, similar to Fig.~\ref{figg:	BSAntennax}, for the case with a small angle-difference (e.g., $\theta-\psi=0.1, 0.15$ and $0.2$), the normalized interference power first slightly increases and then drastically decreases with the NU distance. In contrast, when the angle-difference is sufficiently small (e.g., $\theta-\psi=0.005$), the normalized interference power first increases and then saturates when $r$ is sufficiently large, which is in accordance with Remark~\ref{Re:Angle_diff}. Moreover, it is observed that, with a very small angle-difference (e.g., $\theta-\psi=0.005$), the NU suffers the strongest interference power when it is sufficiently far from the BS, for which case it reduces to a FU.

\vspace{-12pt}
\section{Rate Loss Analysis}
In this section, we study the effect of normalized interference power on the achievable rate of the NU. For notational brevity, we use $f$ to denote $f(N,\psi,\theta,r)$ in the sequel.
\vspace{-12pt}
\subsection{Rate Loss}
To characterize the rate loss due to the FU interference, we first provide the ideal achievable rate of the NU with no interference, which is given by
\begin{equation}\label{Ideal_rate}
	R^*_{\rm near}=\log _2\left(1+\frac{P_{\rm near}g_{\rm near}}{\sigma^2}\right),
	\vspace{-3pt}
\end{equation}
Then, we define $\Delta_{R}=R^*_{\rm near}-R_{\rm near}$ as the rate loss caused by the FU interference, which is upper-bounded as follows.
\begin{lemma}\label{Le:Rate_gap}
    \emph{The rate performance loss $\Delta_{R}$ can be upper-bounded as
    \begin{equation}\label{Eq:loss}
			\Delta_{R}\le\log _2\left(1+\frac{P_{\rm near}g_{\rm near}}{\sigma^2}\cdot\frac{P_{\rm far}f^2}{P_{\rm far}f^2+P_{\rm near}}\right).
		\end{equation}
    }
\end{lemma}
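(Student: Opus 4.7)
The plan is to collapse the difference of the two logarithms defining $\Delta_R$ into a single logarithm, simplify its argument to an exact closed form, and then relax one positive term in the denominator to produce the stated bound. Concretely, I would first abbreviate $a:=P_{\rm near}g_{\rm near}$, $b:=P_{\rm far}g_{\rm near}f^{2}$, and $c:=\sigma^{2}$, so that \eqref{Ideal_rate} and \eqref{Eq:real_rete1} read $R^{*}_{\rm near}=\log_{2}(1+a/c)$ and $R_{\rm near}=\log_{2}(1+a/(b+c))$.

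Subtracting and combining into one logarithm gives $\Delta_{R}=\log_{2}\!\bigl((c+a)(b+c)/[c(a+b+c)]\bigr)$, which after rewriting in the form $1+(\cdot)$ collapses, by an elementary common-denominator calculation, to the exact identity
\[
\Delta_{R}\;=\;\log_{2}\!\left(1+\frac{ab}{c(a+b+c)}\right).
\]
Using $c=\sigma^{2}>0$, so that $a+b+c\ge a+b$, monotonicity of $\log_{2}$ yields $\Delta_{R}\le \log_{2}\!\bigl(1+ab/[c(a+b)]\bigr)$. Finally, dividing numerator and denominator of $b/(a+b)$ by $g_{\rm near}$ converts this fraction into $P_{\rm far}f^{2}/(P_{\rm far}f^{2}+P_{\rm near})$, matching \eqref{Eq:loss} exactly.

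The argument is entirely elementary, and I do not foresee any genuine obstacle beyond bookkeeping. The only mildly nonobvious step is noticing that the exact identity for $\Delta_{R}$ admits the tight relaxation $a+b+c\mapsto a+b$ which, after cancelling $g_{\rm near}$, separates the bound cleanly into a pure SNR factor $P_{\rm near}g_{\rm near}/\sigma^{2}$ and an "interference fraction" depending only on $P_{\rm near}$, $P_{\rm far}$, and $f^{2}$; this is presumably why the authors state the lemma as an upper bound rather than as the exact identity.
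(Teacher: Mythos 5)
Your proof is correct and follows essentially the same route as the paper: combine the two logarithms into the exact identity $\Delta_R=\log_2\bigl(1+ab/[c(a+b+c)]\bigr)$ and then drop the $c^2=\sigma^4$ term in the denominator (your relaxation $a+b+c\mapsto a+b$ is exactly the paper's step $(c)$ of discarding $\sigma^4$). The only difference is cosmetic: you introduce the abbreviations $a,b,c$, which makes the bookkeeping cleaner but does not change the argument.
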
	

\begin{proof}
Based on \eqref{Eq:real_rete1} and \eqref{Eq:loss}, we have
		\begin{align}
			&\Delta_{R} = R^*_{\rm near}-R_{\rm near}\nn\\
			&=\log _2\left(1+\frac{P_{\rm near}g_{\rm near}}{\sigma^2}\right)-\log _2\left(1+\frac{P_{\rm near}g_{\rm near}}{P_{\rm far}g_{\rm near}f^2+\sigma^2}\right)\nn\\
			&=\log _2\left(1+\frac{P_{\rm near}P_{\rm far}g^2_{\rm near}f^2}{P_{\rm far}g_{\rm near}f^2\sigma^2+P_{\rm near}g_{\rm near}\sigma^2+\sigma^4}\right)\nn\\
			&\stackrel{(c)}{\le}\log _2\left(1+\frac{P_{\rm near}P_{\rm far}g^2_{\rm near}f^2}{P_{\rm far}g_{\rm near}f^2\sigma^2+P_{\rm near}g_{\rm near}\sigma^2}\right)\nn\\
			&=\log _2\left(1+\frac{P_{\rm near}g_{\rm near}}{\sigma^2}\cdot\frac{P_{\rm far}f^2}{P_{\rm far}f^2+P_{\rm near}}\right),
		\end{align}
	where $(c)$ is obtained by dropping the term $\sigma^4$, thus completing the proof.
\end{proof}
Lemma~\ref{Le:Rate_gap} is intuitively expected since a higher normalized interference power leads  to a larger rate loss $\Delta_{R}$. Combined with the results for the analysis of the interference power, it can be concluded that there is a larger rate loss when the NU and FU angles are very close. However, the analysis for the effects of number of BS antennas and NU distance are not straightforward. For example, when $N$ decreases, it can be shown that $g_{\rm near}$ monotonically decreases, while $f^2$ generally increases, leading to a larger $\frac{P_{\rm far}f^2}{P_{\rm far}f^2+P_{\rm near}}$. Thus, we further provide numerical results in the next to examine these effects.
\begin{table}[t]
	\caption{{Simulation parameters}}
	\label{Table1}
	\centering
	\begin{tabular}{|c|c|}
		\hline
		{\bf{Parameter}} & {\bf{Value}} \\
		\hline
		Number of BS antennas & $N=256$ \\
		\hline
		Carrier frequency & $f=30$ GHz \\
		\hline
		Reference path-loss & $\beta=(\lambda/4\pi)^2=-62$ dB \\
		\hline
		Transmit power for NU  & $P_{\rm near}=20$ dBm \\
		\hline
		Transmit power for FU & $P_{\rm far}=30$ dBm \\
		\hline
		Noise power & $\sigma^2=-70$ dBm \\
		\hline
		
		Distance from BS to NU & $r=3$ m \\
		\hline
	\end{tabular}
\vspace{-12pt}
\end{table}
\vspace{-12pt}
\subsection{Numerical Results}
\vspace{-2pt}
Last, we show the effects of the four key parameters on the achievable rate of the NU in Figs. \ref{fig:Loss_f2_N}--\ref{fig:Loss_f2_r} with the simulation parameters shown in Table~\ref{Table1}. {To begin with, we plot in Figs. \ref{fig:Loss_f2_N} and \ref{fig:Loss_f2_psi} the results based on the used Fresnel approximation in \eqref{Eq:Th1} and its actual value in \eqref{Eq:New_f} versus $N$ and $\psi$. It is observed that the used approximation is highly accurate with the actual value under different $N$ and $\psi$.}
Second, it is observed from Fig. \ref{fig:Loss_f2_N} that the rate loss is monotonically decreasing with $N$, which agrees with the result in Lemma~\ref{Le:Rate_gap}. An interesting observation is that there still exists a large rate loss when the number of antennas becomes very large (i.e., very low interference power). This is because, with an increasing $N$, the interference power generally decreases, while the beamforming gain (i.e., $g_{\rm near}$ in \eqref{Eq:loss}) increases, thus leading to the performance gap between the achievable and ideal rates of NU. On the other hand, the rate loss fluctuates when the FU angle is in the neighborhood of the NU angle, and drastically decreases when the angle is larger than a threshold (see Figs. \ref{fig:Loss_f2_psi} and \ref{fig:Loss_f2_theta}). Next, in Fig.~\ref{fig:Loss_f2_r}, we observe that the rate loss first slightly fluctuates when $r$ is small, and sharply decreases when $r$ increases. Last,
it is observed from Figs. \ref{fig:Loss_f2_psi}--\ref{fig:Loss_f2_r} that the achievable rate of NU will eventually approach to the ideal rate when the rate loss is sufficiently small.

\begin{figure}[t]
	\centering
	\subfigure[$\theta=0.05$,~$\psi=0$,~$r=3$]{\includegraphics[width=4.6cm]{./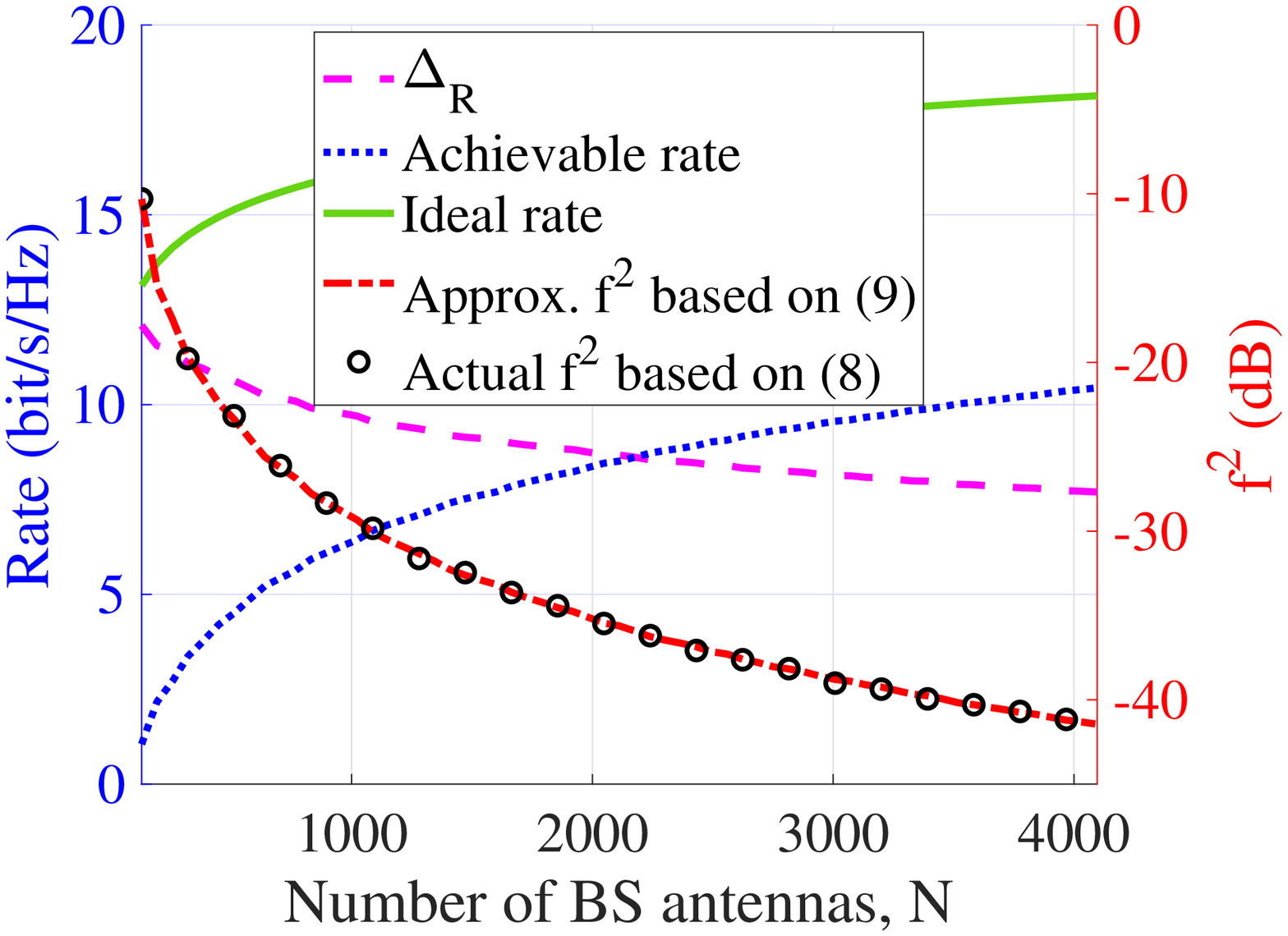}\label{fig:Loss_f2_N}}
	\hspace{-8pt}
	\subfigure[$\theta=0$,~$N=256$,~$r=3$]{\includegraphics[width=4.6cm]{./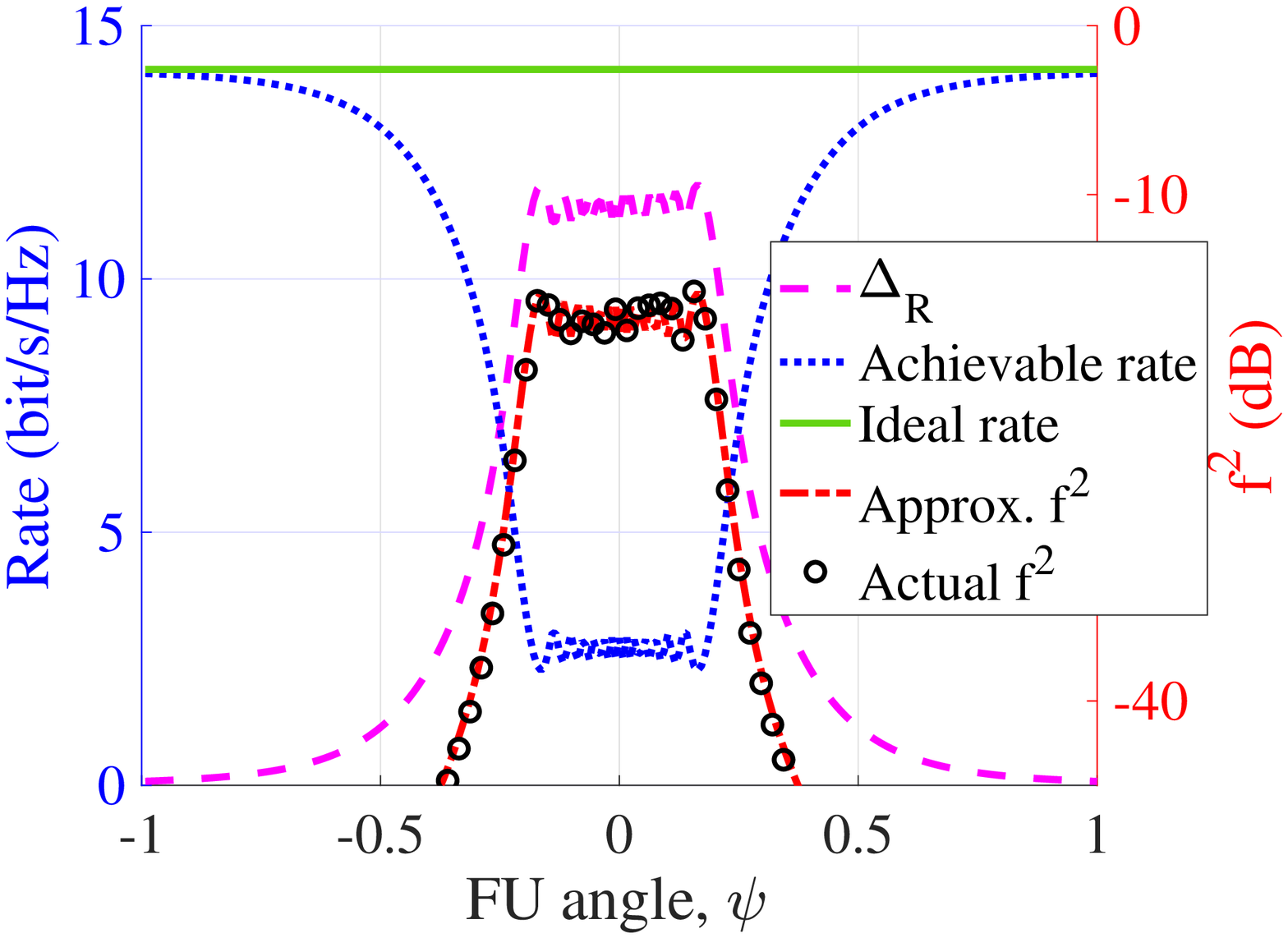}\label{fig:Loss_f2_psi}}
	\subfigure[$\psi=0$,~$N=256$,~$r=3$]{\includegraphics[width=4.6cm]{./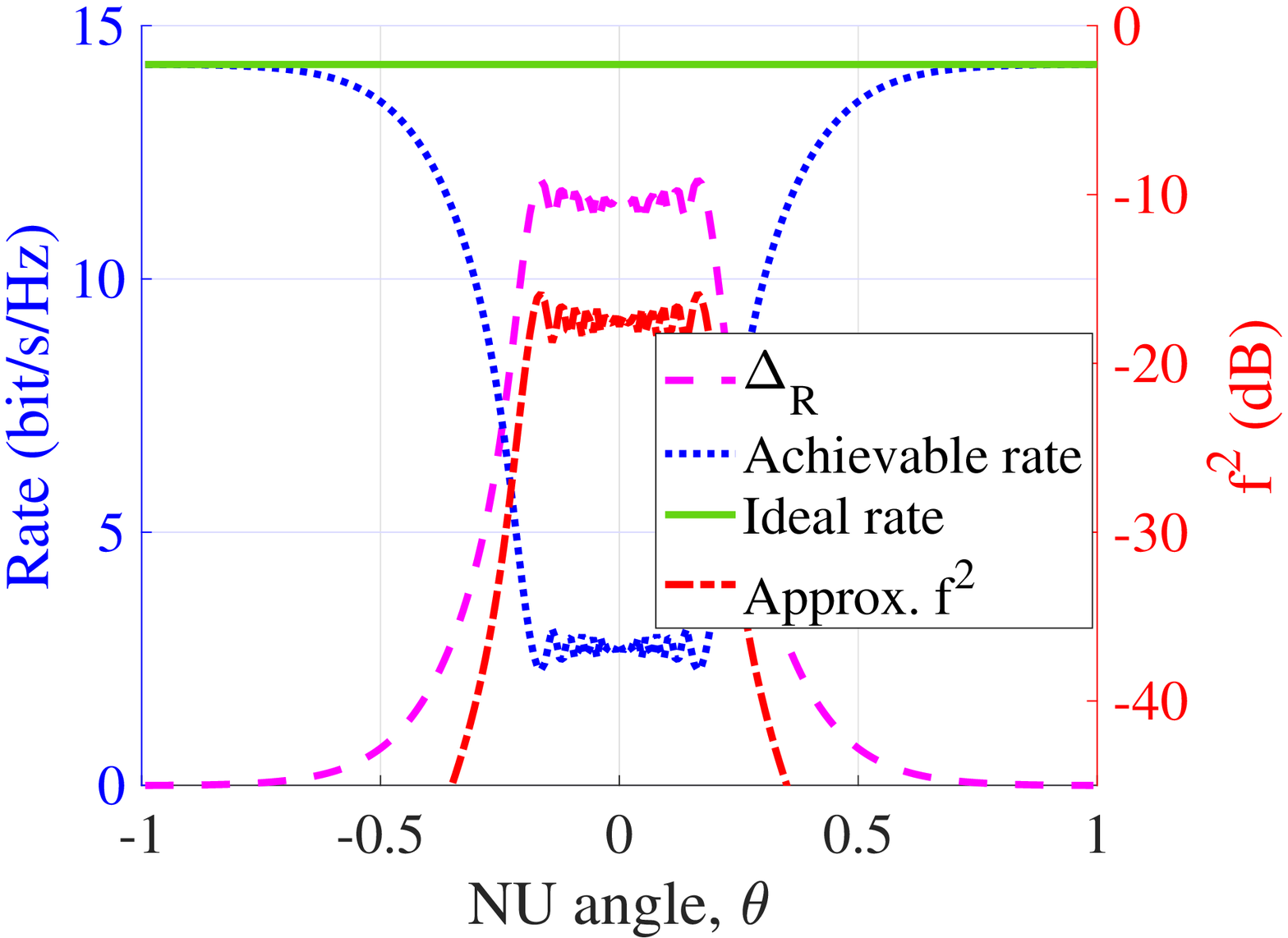}\label{fig:Loss_f2_theta}}
	\hspace{-8pt}
	\subfigure[$\theta=0.05$,~$\psi=0$,~$N=256$]{\includegraphics[width=4.6cm]{./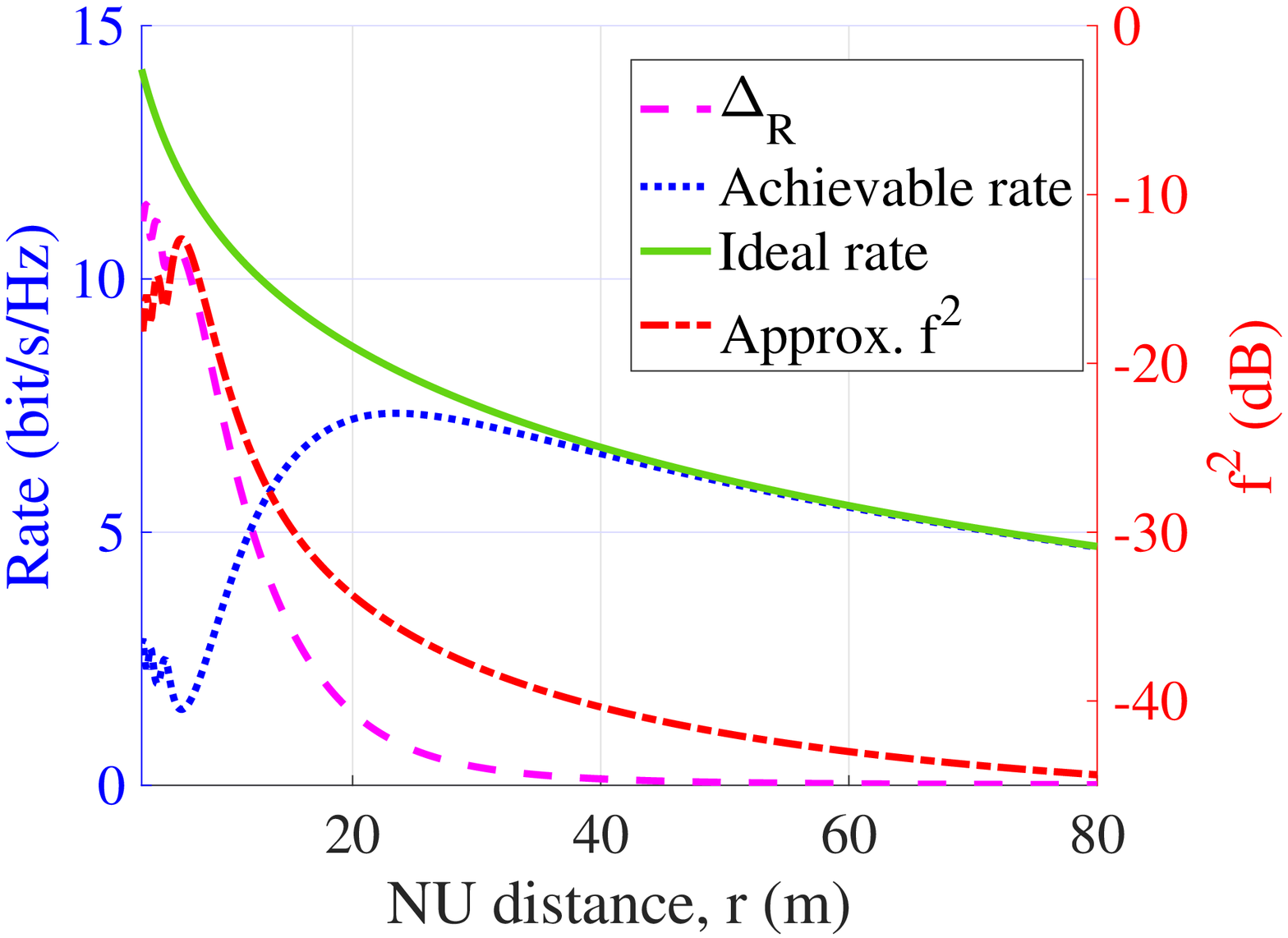}\label{fig:Loss_f2_r}}
	\caption{{Rate loss versus the number of BS antennas, the FU angle, the NU angle and distance.}}\label{fig:final}
	\vspace{-16pt}
\end{figure}
\section{Conclusions}

In this paper, we analyzed the inter-user interference in the new mixed near- and far-field communication scenario. Specifically, we first obtained a closed-form expression for the normalized interference power at the NU caused by the FU beam, based on which, the effects of the number of BS antennas, FU angle, NU angle and distance were analyzed. Moreover, the explicit rate-loss expression caused by the FU interference was obtained, which was verified by numerical results. In the future, this work can be extended in several directions. For example, it is interesting to consider different precoding designs (e.g., fully-digital and hybrid architectures), and multi-access techniques (e.g., non-orthogonal multiple access (NOMA)) to suppress inter-user interference.

\bibliographystyle{IEEEtran}
\vspace{-12pt}
\bibliography{IEEEabrv,Ref}

\end{document}